\newtheorem{theorem}{Theorem}
\newtheorem{lemma}[theorem]{Lemma}
\begin{document}

\title{On the Diversity Gain Region of the Z-interference Channels}
\author{\large Mohamed S. Nafea$^*$, Karim G. Seddik$^\dag$, Mohammed Nafie$^*$, and Hesham El Gamal$^\S$\\ [.1in]
\normalsize  \begin{tabular}{c}
$^*$Wireless Intelligent Networks Center (WINC), Nile University, Cairo, Egypt. \\
$^\dag$Electronics Engineering Department, American University in Cairo, AUC Avenue, New Cairo, Egypt.\\
$^\S$Department of Electrical and Computer Engineering, Ohio State University, Columbus, USA.\\
\footnotesize Email: mohamed.nafea@nileu.edu.eg, kseddik@aucegypt.edu, mnafie@nileuniversity.edu.eg, helgamal@ece.osu.edu\\ \normalsize
\end{tabular}
\thanks{``This paper was made possible by a NPRP grant 09-1168-2-455 from the Qatar National Research Fund (a member of The Qatar Foundation). The statements made herein are solely the responsibility of the authors".}
\thanks{Mohammed Nafie is also affiliated with Faculty of Engineering, Cairo University.}\normalsize
 \vspace{-.4in}
}
 \maketitle

\begin{abstract}
In this work, we analyze the diversity gain region (DGR) of the single-antenna Rayleigh fading Z-Interference channel (ZIC). More specifically, we characterize the achievable DGR of the fixed-power split Han-Kobayashi (HK) approach under these assumptions. Our characterization comes in a closed form and demonstrates that the HK scheme with only a common message is a singular case, which achieves the best DGR among all HK schemes for certain multiplexing gains. Finally, we show that generalized time sharing, with variable rate and power assignments for the common and private messages, does not improve the achievable DGR.
\end{abstract}

\vspace{-.1in}
\section{Introduction}\label{Int}
The Z-Interference channel (ZIC) is the natural information theoretic model for many practical wireless communication systems. For example, the ``loud neighbor problem'' in femto-cells where a mobile station communicating with its long-range base station causes interference to the receiver of a short-range femto-cell \cite{LNP} can be accurately modeled as a ZIC. This paper analyzes the {\it{outage limited}} single antenna ZIC in the asymptotically large signal-to-noise ($\textsc{SNR}$) regime. Towards this end, we adopt the diversity-multiplexing tradeoff framework proposed by Zheng and Tse~\cite{TseDiv}.

The diversity-multiplexing tradeoff (DMT) for the interference
channel and the ZIC were studied in \cite{Bolcskei} and \cite{Doha}, respectively. The analysis of these earlier works assumed a {\em single} probability of system error which is given by the error probability of the worst user, and hence, failed to capture the tradeoff involved in the scenario where the two users require different Quality of Service (QoS) metrics. By considering the individual error performance metrics, our results extend this work and introduce the notion of the Diversity Gain Region (DGR) of the ZIC, which is the set of {\em simultaneously achievable} diversity gain pairs for the two users in the ZIC for a given multiplexing gain pair. Our main result is a complete characterization of the achievable DGR using the fixed power split Han-Kobayashi (HK) approach~\cite{Bolcskei}. This characterization is obtained in a closed form and sheds light on the structure of efficient communication strategies for the ZIC. For example, it is shown that, within the class of HK strategies, the special case of {\em common message only} (CMO) corresponds to a singular point, in the DGR, which is optimal in a certain range of multiplexing gains. In addition, the optimal choice of the splitting parameters in the other range of multiplexing gains is obtained. Finally, we show that generalized time sharing, with arbitrary power and rate allocations, does not improve the DGR of HK schemes.

The rest of this paper is organized as follows. In Section \ref{sysmod}, we describe our system model and notations. Section~\ref{DGRachive} obtains the achievable DGR as a function of splitting parameters and multiplexing gains and analyzes the two extreme special cases of Common Message Only (CMO) and Treating Interference As Noise (TIAN). In Section \ref{optimization}, we derive closed-form expressions for the DGR and optimal splitting parameters. In Section \ref{TS}, we prove that generalized time sharing of the HK scheme does not improve the achievable DGR. Section~\ref{Con} concludes the paper.

\section{System Model}\label{sysmod}
In this paper, we consider a two-user single-antenna communication system over a Rayleigh fading ZIC, i.e., transmitter 2 (TX2) causes interference to receiver 1 (RX1) but not vice versa as depicted in Fig. \ref{zmodel}. Each transmitter (TX$i,\;i =1,\;2)$ chooses a codeword $\textbf{x}_i \in\mathbb{C}^l$, $||\textbf{x}_{i}||^2\leq l$, from its codebook and transmits $\bold{\tilde{x}}_i=\sqrt{P_{i}}\textbf{x}_{i}$ according to its transmit power constraint $\bold{||\tilde{x}}_i||^2\leq lP_{i}$. We parameterize the attenuation of transmit signal $i$ at receiver $j$ (RX$j.\;j=1,\;2$) using the real-valued coefficients $\eta_{ij}>0$.

The input-output relations are given by
\begin{equation}
\begin{split}
&\bold{y}_1=\eta_{11} h_{11}\bold{\tilde{x}}_1+\eta_{21} h_{21}\bold{\tilde{x}}_2+\bold{n}_1\\ &\bold{y}_2=\eta_{22} h_{22}\bold{\tilde{x}}_2+\bold{n}_2,
\end{split}
\end{equation}
\noindent where $\bold{y}_i, \bold{n}_i\in\mathbb{C}^l$ denote the received codeword and the noise vector at RX$i$, respectively. The noise vectors are modeled as complex Gaussian random vectors with i.i.d. entries as $\bold{n}_1, \bold{n}_2\sim{\cal{CN}}(\bold{0},\bold{I}_l)$; the noise vectors are assumed to be temporally white. The channel gains $h_{11}$, $h_{21}$, and $h_{22}$ are i.i.d. complex Gaussian random variables with zero mean and unit variance. These gains are assumed to remain constant for a block of $l$ symbols and change randomly from one block to another. We also assume the channel gains to be known at the receiver but not at the transmitter. To simplify our results, we set $\eta_{11}^{2}P_{1}=\eta_{22}^{2}P_{2}=\textsc{SNR}$ and $\eta_{21}^{2}P_{2}=\textsc{SNR}^{\beta}$ with $\beta\geq0$. Thus, the input-output relations can be expressed as
\begin{equation}
\begin{split}
&\bold{y}_1=\sqrt{\textsc{SNR}} h_{11}\bold{x}_1+\sqrt{\textsc{SNR}^{\beta}} h_{21}\bold{x}_2+\bold{n}_1\\
&\bold{y}_2=\sqrt{\textsc{SNR}} h_{22}\bold{x}_2+\bold{n}_2.
\end{split}
\label{eq:zic_model}
\end{equation}

We express the exponential order of the channel gains as $|h_{ij}|^{2}=\textsc{SNR}^{-\gamma_{ij}}$. Notice that $|h_{ij}|^{2}$ is exponentially distributed with density $p_{|h|^{2}}(t)=e^{-t}$. By change of variables, it is a simple matter to show that the probability density function of $\gamma_{ij}$ in the high-$\textsc{SNR}$ is given by
\begin{equation}\small
p_{\gamma_{ij}}=\begin{cases}
0,&\mbox{\space for\space $\gamma_{ij}<0$}\\
\textsc{SNR}^{-\gamma_{ij}},&\mbox{\space for \space $\gamma_{ij}\geq0$}.
\end{cases}
\label{eq:gammapdf}
\end{equation}\normalsize

We consider an encoding scheme as a set of codebooks $\left\{C(\textsc{SNR})\right\}$ of block length $l$; one at each \textsc{SNR} level. An encoding scheme $\left\{C(\textsc{SNR})\right\}$ is said to achieve a multiplexing gain pair $(r_{1}, r_{2})$ and a diversity gain pair $(d_{1}, d_{2})$ for a ZIC if
\begin{equation}\small
\begin{split}
\lim_{\textsc{SNR}\rightarrow\infty}&\frac{R_{1}(\textsc{SNR})}{\log \textsc{SNR}}=r_{1},\;\;\;
\lim_{\textsc{SNR}\rightarrow\infty}\frac{R_{2}(\textsc{SNR})}{\log \textsc{SNR}}=r_{2}\\
\lim_{\textsc{SNR}\rightarrow\infty}&\frac{\log P_{e_1}(\textsc{SNR})}{\log \textsc{SNR}}=-d_{1},\;\;\;
\lim_{\textsc{SNR}\rightarrow\infty}\frac{\log P_{e_2}(\textsc{SNR})}{\log \textsc{SNR}}=-d_{2},
\end{split}
\end{equation}\normalsize
\noindent where $R_{1}(\textsc{SNR})$ and $R_{2}(\textsc{SNR})$ are the rates of user 1 and user 2, respectively, while $P_{e_1}(\textsc{SNR})$ and $P_{e_2}(\textsc{SNR})$ are their probabilities of error. Following the notations in \cite{TseDiv} and \cite{Tsediversity}, we define $P_e(\textsc{SNR})\doteq\textsc{SNR}^{-d}$ if $\lim_{\textsc{SNR}\rightarrow\infty}\frac{\log P_e(\textsc{SNR})}{\log\textsc{SNR}}=-d$, and $\dot{\leq}$, $\dot{\geq}$ are defined similarly.

\begin{figure}
	\centering
	\includegraphics[width=75mm, height = 40mm]{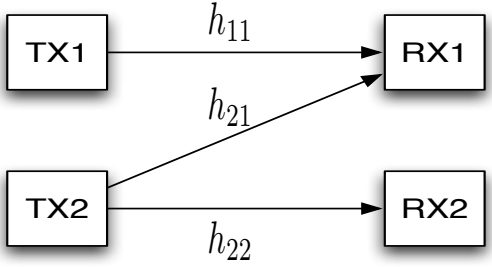}
	\caption{The ZIC model.}
	\label{zmodel}
\vspace{-.2in}
\end{figure}

According to the pdf of $\gamma_{ij}$ in (\ref{eq:gammapdf}) for i.i.d. random variables $\gamma_{11}$, $\gamma_{21}$, and $\gamma_{22}$, the probability $P_{out}$ that $\gamma_{11}$, $\gamma_{21}$, and $\gamma_{22}$  belong to a set $\mathcal{O}$ can be characterized by $P_{out}\doteq \textsc{SNR}^{-d_{out}}$ where $d_{out}= \underset{\gamma_{11},\gamma_{12},\gamma_{22}\in \mathcal{O}}\inf(\gamma_{11}+\gamma_{21}+\gamma_{22})$.

\section{DGR of Different Schemes in ZIC}\label{DGRachive}
In this section, we investigate the DGR for the two users in the ZIC given that they are operating at a multiplexing gain pair $(r_{1},r_{2})$ and under different schemes of encoding, transmission, and decoding. We consider the two-message fixed-power split HK approach applied at TX2 and two special cases of it where only a common or a private message is sent from TX2.

\subsection{DGR of the two-message fixed-power split HK approach}
We consider here the use of the two-message fixed-power split HK approach applied at TX2 \cite{Bolcskei,tuni} . More specifically, we define the private and common messages transmitted from TX2 with rates $S_{2}=s_{2} \log\textsc{SNR}$ and $T_{2}=t_{2}\log\textsc{SNR}$, respectively. Hence, $r_{2}=s_{2}+t_{2}$, $s_{2},t_{2}\geq 0$, and $0\leq r_{i} \leq 1$. Similar to \cite{Bolcskei}, we consider a joint Maximum Likelihood (ML) decoder at RX1 applied to the message of TX1 and the common message of TX2. At RX2, joint ML detection is carried out for both the private and common messages of TX2. For TX2, we parameterize the ratio of the average private power to the total average power as \cite{Doha, tuni}
\begin{equation}\small
\alpha=\frac{1}{1+\textsc{SNR}^b}\quad\in[0,1],\quad\quad{b}\in\mathbb{R}.
\end{equation}\normalsize
\noindent The received $\textsc{SNR}$s and Interference to Noise Ratios ($\textsc{INR}$s) on channels $h_{22},h_{21}$ are
\begin{equation}\small
\begin{split}
&\eta_{22}^{2}P_{2,{\rm{private}}}=\frac{\textsc{SNR}}{1+\textsc{SNR}^b},\;\; \eta_{22}^{2}P_{2,{\rm{common}}}=\frac{\textsc{SNR}^{1+b}}{1+\textsc{SNR}^b},\\
&\eta_{21}^{2}P_{2,{\rm{private}}}=\frac{\textsc{SNR}^{\beta}}{1+\textsc{SNR}^b},\;\; \eta_{21}^{2}P_{2,{\rm{common}}}=\frac{\textsc{SNR}^{\beta+b}}{1+\textsc{SNR}^b} .
\end{split}
\end{equation}\normalsize

We demonstrate that in the high-$\textsc{SNR}$ limit any choice of $b<0$ yields a zero diversity order. This contradicts the goal of improving the individual error exponents. Hence the region of optimization for the power splitting parameter $b$ is only over nonnegative values. In the high-$\textsc{SNR}$ limit, we can make the approximation that $1-\alpha\doteq1$.

The outage region $R_{\rm{HK,ZIC}}^{c}$ can be defined as
\begin{equation}\small
R_{\rm{HK,ZIC}}^{c}= \begin{cases}
&(\gamma_{11},\gamma_{21},\gamma_{22})\in \mathbb{R}_{+}^{3}:
R_{1}>\log\left(1+\frac{\textsc{SNR}^{1-\gamma_{11}}}{1+\frac{\textsc{SNR}^{\beta-\gamma_{21}}}{1+\textsc{SNR}^b}}\right)\\
&R_{1}+T_{2}>\log\left(1+\frac{\textsc{SNR}^{1-\gamma_{11}}+\textsc{SNR}^{\beta-\gamma_{21}}}{1+\frac{\textsc{SNR}^{\beta-\gamma_{21}}}{1+\textsc{SNR}^b}}\right)\\
&R_{2}>\log\left(1+\textsc{SNR}^{1-\gamma_{22}}\right)\\
&T_{2}>\log\left(1+\textsc{SNR}^{1-\gamma_{22}}\right)\\
&S_{2}>\log\left(1+\frac{\textsc{SNR}^{1-\gamma_{22}}}{1+\textsc{SNR}^b}\right).
\end{cases}
\label{eq:ORHK1}
\end{equation}\normalsize

We emphasize that, unlike a regular MAC region, no decoding error is declared at RX1 when erroneous decoding of the common message of TX2 occurs. Therefore, the corresponding outage event $T_2>\log\left(1+\frac{\textsc{SNR}^{\beta-\gamma_{21}}}{1+\frac{\textsc{SNR}^{\beta-\gamma_{21}}}{1+\textsc{SNR}^b}}\right)$ is not considered by RX1 \cite{Bolcskei}.

We can simplify this outage region by recognizing that the outage event $T_{2}>\log(1+\textsc{SNR}^{1-\gamma_{22}})$ is a subset of the outage event
$R_{2}>\log(1+\textsc{SNR}^{1-\gamma_{22}})$ and can be eliminated \cite{heshamHK}. The high-$\textsc{SNR}$ approximation for the outage region is given by
\begin{equation}\small
\begin{split}
R_{\rm{HK,ZIC}}^{c}=\begin{cases}
&(\gamma_{11},\gamma_{21},\gamma_{22})\in \mathbb{R}_{+}^{3}:\\
&r_1>\left[1-\gamma_{11}-\left[\beta-\gamma_{21}-b\right]^{+}\right]^{+}\\
&r_1+t_2>\Big[\max\left\{[1-\gamma_{11}]^+,[\beta-\gamma_{21}]^+\right\}\\
&\qquad\qquad\qquad-[\beta-\gamma_{21}-b]^+\Big]^+\\
&r_{2}>[1-\gamma_{22}]^+\\
&s_{2}>[1-\gamma_{22}-b]^+.
\end{cases}
\end{split}
\label{eq:ORHK}
\end{equation}\normalsize

\begin{theorem}
\label{HKdiversities}
The receiver diversities $d_{1,\rm{HK}}$ and $d_{2,\rm{HK}}$ of a two-user ZIC where the interfering transmitter TX2 uses the fixed-power split HK approach are given by
\begin{equation}\small
\begin{split}
&d_{1,\rm{HK}}(t_{2},b)=\min \left\{d_{11,\rm{HK}},d_{12,\rm{HK}}\right\},\qquad\text{where,}\\
&d_{11,\rm{HK}}=\left[1-r_{1}-\left[\beta-b\right]^+\right]^+\\
&d_{12,\rm{HK}}=\begin{cases}
\left[1-r_1-t_2\right]^{+}+\left[\beta-r_1-t_2\right]^+,\;\;b\geq r_1+t_2\\
\left[1-r_1-t_2-\left[\beta-b\right]^+\right]^{+},\;\; b<r_1+t_2,\\
\end{cases}\\
&\text{and,}\qquad d_{2,\rm{HK}}(t_2,b)=\min\{d_{21,\rm{HK}},d_{22,\rm{HK}}\},\\
&\text{where,}\;\;\;\; d_{21,\rm{HK}}=\left[1-r_2\right]^+\\
&\qquad\;\;\;\;\;\;d_{22,\rm{HK}}=\left[1-r_2-b+t_2\right]^+.
\end{split}
\label{eq:HKdiversities}
\end{equation}\normalsize
\end{theorem}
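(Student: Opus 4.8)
The plan is to reduce the theorem to four scalar optimizations of exactly the type that defines $d_{out}$ in Section~\ref{sysmod}, and to evaluate each by a short case analysis on the $[\cdot]^+$-pieces that can be active (we work throughout with $b\ge 0$, for which \eqref{eq:ORHK} is the high-$\textsc{SNR}$ outage region). The first step is the standard equivalence, used in \cite{TseDiv,Bolcskei}, between error probability and outage probability for i.i.d.\ Gaussian codebooks with joint ML decoding: $P_{e_1}(\textsc{SNR})\doteq P\big((\gamma_{11},\gamma_{21},\gamma_{22})\in\mathcal{O}_1\big)$ and $P_{e_2}(\textsc{SNR})\doteq P\big((\gamma_{11},\gamma_{21},\gamma_{22})\in\mathcal{O}_2\big)$, where $\mathcal{O}_1$ is the union of the first two events in \eqref{eq:ORHK} and $\mathcal{O}_2$ the union of the last two; this is the split of \eqref{eq:ORHK1} by receiver, after discarding the $T_2$-event as a subset of the $R_2$-event and excluding the third MAC-type corner at RX1 because it does not trigger an RX1 error, both as noted above \eqref{eq:ORHK}. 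Applying $P_{out}\doteq\textsc{SNR}^{-d_{out}}$ with $d_{out}=\inf_{\mathcal{O}}(\gamma_{11}+\gamma_{21}+\gamma_{22})$, together with $\textsc{SNR}^{-a}+\textsc{SNR}^{-a'}\doteq\textsc{SNR}^{-\min\{a,a'\}}$, then gives $d_{1,\rm{HK}}=\min\{d_{11,\rm{HK}},d_{12,\rm{HK}}\}$ and $d_{2,\rm{HK}}=\min\{d_{21,\rm{HK}},d_{22,\rm{HK}}\}$, where each $d_{jk,\rm{HK}}$ is the infimum of $\gamma_{11}+\gamma_{21}+\gamma_{22}$ over the corresponding single inequality of \eqref{eq:ORHK} intersected with $\mathbb{R}_+^3$. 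It then remains to evaluate these four infima.

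Three of them are immediate. In the inequalities for $d_{21,\rm{HK}}$ and $d_{22,\rm{HK}}$ only $\gamma_{22}$ appears, so set $\gamma_{11}=\gamma_{21}=0$ and solve a one-dimensional $[\cdot]^+$-inequality, obtaining $d_{21,\rm{HK}}=[1-r_2]^+$ and, using $s_2=r_2-t_2$, $d_{22,\rm{HK}}=[1-s_2-b]^+=[1-r_2-b+t_2]^+$. In the inequality for $d_{11,\rm{HK}}$ only $\gamma_{11},\gamma_{21}$ appear (put $\gamma_{22}=0$); its right-hand side $[1-\gamma_{11}-[\beta-\gamma_{21}-b]^+]^+$ is non-increasing in both $\gamma_{11}$ and $\gamma_{21}$ while the objective is increasing in both, so the infimum is attained with $\gamma_{21}=0$ (maximizing the cancellation term to $[\beta-b]^+$) and $\gamma_{11}$ as small as the constraint permits, giving $d_{11,\rm{HK}}=[1-r_1-[\beta-b]^+]^+$.

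The real work is $d_{12,\rm{HK}}$, where both objective and constraint depend on $\gamma_{11},\gamma_{21}$ (again $\gamma_{22}=0$) through nested $[\cdot]^+$'s and a $\max$, so a region-by-region analysis is unavoidable: I would split on the sign of $\beta-\gamma_{21}-b$, and, when it is positive, on whether $[1-\gamma_{11}]^+\le\beta-\gamma_{21}$ (in which branch the constraint collapses to the $\gamma$-free condition $b<r_1+t_2$). Carrying this out, when $b\ge r_1+t_2$ the cancellation term $[\beta-\gamma_{21}-b]^+$ is useless at every feasible point, the region $\gamma_{21}<\beta-b$ is infeasible, and the minimizer drives $[1-\gamma_{11}]^+$ and $[\beta-\gamma_{21}]^+$ each down to $r_1+t_2$, yielding $[1-r_1-t_2]^++[\beta-r_1-t_2]^+$; when $b<r_1+t_2$ the minimizer instead exploits the $\gamma_{11}$-direction while keeping $\beta-\gamma_{21}-b$ the binding term, yielding $[1-r_1-t_2-[\beta-b]^+]^+$. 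The main obstacle is precisely this case: one must check that all competing branches — notably the branch on which the constraint function equals the constant $b$, which on its own only yields $[1-\beta]^+$, and the branch $\gamma_{21}\ge\beta-b$, which yields $[1-r_1-t_2]^++\beta-b$ — are dominated by the claimed value, which reduces to comparing a handful of $[\cdot]^+$ expressions using $b<r_1+t_2$. Substituting the four infima into the two $\min$'s gives \eqref{eq:HKdiversities}.
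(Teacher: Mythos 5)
Your proposal is correct and follows exactly the route the paper intends (the paper states that the diversities are derived from the outage region in~(\ref{eq:ORHK}) and omits the details): split the events by receiver, compute $d_{out}=\inf(\gamma_{11}+\gamma_{21}+\gamma_{22})$ for each event separately, and take minima, with the only nontrivial work being the branch analysis for $d_{12,\rm{HK}}$, which you carry out correctly. One harmless slip: the right-hand side of the $d_{11,\rm{HK}}$ constraint is non-\emph{decreasing} in $\gamma_{21}$ (not non-increasing), but this only strengthens your conclusion that $\gamma_{21}=0$ is optimal, as your parenthetical about maximizing the cancellation term to $[\beta-b]^+$ already reflects.
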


Using the outage region equations given in (\ref{eq:ORHK}), we can derive the individual diversities of the two-message fixed-power split HK scheme; the proof is omitted due to space limitations.

\subsection{DGR of the CMO special case}
Herein, we consider the special case of the HK approach where only a common message is sent from TX2; RX1 uses a joint ML decoder to jointly decode the intended message from TX1 and the interference message from TX2 whose rates are $R_1$ and $R_2$, respectively. Under this scheme, the high-$\textsc{SNR}$ approximation of the outage region can be written as
\begin{equation}\small
\begin{split}
R_{\rm{CMO,ZIC}}^c=\begin{cases}
&\left(\gamma_{11},\gamma_{21},\gamma_{22}\right)\in \mathbb{R}_{+}^{3}:\;\;r_1>[1-\gamma_{11}]^+\\
&r_1+r_2>\max\left\{[1-\gamma_{11}]^+,[\beta-\gamma_{21}]^+\right\}\\
&r_2>[1-\gamma_{22}]^+,
\end{cases}
\end{split}
\end{equation}\normalsize
\noindent which yields the following RX1 and RX2 diversities
\begin{equation}\small
\begin{split}
&d_{1,\rm{CMO}}=\min\left\{[1-r_1]^{+},[1-r_1-r_2]^++[\beta-r_1-r_2]^+\right\}\\
&d_{2,\rm{CMO}}=[1-r_2]^+ .
\end{split}
\label{eq:CMOdiversities}
\end{equation}\normalsize

The corresponding splitting parameters of the CMO special case are $b=\infty$ and $t_2=r_2$. It is noteworthy that we cannot obtain the resulting RX1 and RX2 diversities in the above equation directly by setting $b=\infty$ and $t_2=r_2$ in the expressions of RX1 and RX2 diversities for the HK approach given in (\ref{eq:HKdiversities}). We first have to remove the outage event
$S_2>\log\left(1+\frac{\textsc{SNR}^{1-\gamma_{22}}}{1+\textsc{SNR}^b}\right)$ from the outage region given in (\ref{eq:ORHK1}) where $S_2=0$
for this special case. Then, by removing the diversity expression corresponding to this outage event, $d_{22,\rm{HK}}$, and setting $b=\infty$ and $t_2=r_2$ in (\ref{eq:HKdiversities}), we get the results for $d_{1,\rm{CMO}}$ and $d_{2,\rm{CMO}}$
in (\ref{eq:CMOdiversities}). It is for this reason we demonstrate that the CMO scheme is a {\it{singular}} special case of the HK approach.

\subsection{DGR of the TIAN special case}
In this subsection, we consider the case where RX1 treats the interference from TX2 as additive white Gaussian noise with variance $|h_{21}|^{2}\textsc{SNR}^\beta$, i.e., TX2 sends a private message only. Under this scheme, the high-$\textsc{SNR}$ approximation of the outage region can be written as
\begin{equation}\small
\begin{split}
&R_{\rm{TIAN,ZIC}}^{c}=\\
&\begin{cases}
&\left(\gamma_{11},\gamma_{21},\gamma_{22}\right)\in \mathbb{R}_{+}^{2}:\;r_1>\left[1-\gamma_{11}-\left[\beta-\gamma_{21}\right]^+\right]^+\\
&r_2>\left[1-\gamma_{22}\right]^+.
\end{cases}
\end{split}
\end{equation}\normalsize

It is straight forward to show that when using the TIAN scheme, RX1 and RX2 diversities are given by
\begin{equation}\small
\begin{split}
&d_{1,\rm{TIAN}}=[1-r_1-\beta]^+,\qquad d_{2,\rm{TIAN}}=[1-r_2]^+.
\end{split}
\end{equation}\normalsize

The actual splitting parameters corresponding to the TIAN scheme are $b=-\infty$ and $t_2=0$. However, because of considering the high-$\textsc{SNR}$ approximation $\alpha\doteq\textsc{SNR}^{-b}$ in the outage region equations, the corresponding power splitting parameter $b$ becomes equal to zero. We can hence obtain the same expressions for $d_{1,\rm{TIAN}}$ and $d_{2,\rm{TIAN}}$ by setting $b=0$ and $t_2=0$ in the equations given in (\ref{eq:HKdiversities}). Therefore, the TIAN scheme is a direct special case of the HK approach.

\section{Optimization over the Rate and Power Splitting Parameters}\label{optimization}
The two-message fixed-power split HK approach and the two special cases of it, CMO and TIAN schemes, correspond to different power and rate splitting ratios between the private and common messages (at TX2), different encoding schemes, and different decoding algorithms. As previously mentioned, RX1 and RX2 diversities for the TIAN scheme can be obtained directly from the HK approach by setting $b=0$ and $t_2=0$. Therefore, when we consider optimizing the DGR over the splitting parameters for the HK approach, the TIAN scheme, contrary to the singular CMO counterpart, cannot improve the achievable DGR. Note that for the CMO scheme, RX2 diversity is similar to that of the single-user case. Thus, it cannot be increased using the HK scheme, which is not the case for RX1 diversity. For a given rate pair ($r_1,r_2$), obviously, either $d_{1,\rm{HK}}$ or $d_{1,\rm{CMO}}$ dominates. Therefore, the maximum simultaneously achievable RX1 and RX2 diversities of the general HK scheme are given by
\begin{equation}\small
\begin{split}
&d_{1}=\max \left\{d_{1,\rm{CMO}},d_{1,\rm{HK}}\right\}\\
&d_{2}=\begin{cases}
d_{2,\rm{CMO}},\quad\text{if}\;\; d_{1,\rm{CMO}}\geq d_{1,\rm{HK}}\\
d_{2,\rm{HK}},\quad\;\;\;\text{if}\;\; d_{1,\rm{CMO}}< d_{1,\rm{HK}},
\end{cases}\\
&\text{where,}\\
&d_{2,\rm{HK}}=\underset{\left\{(t_2,b):\;d_{1,\rm{HK}}(t_{2},b)=d_{1,\rm{HK}}\right\}}\max d_{2,\rm{HK}}(t_{2},b).
\end{split}
\end{equation}\normalsize

Our analysis here considers the low-level interference case, i.e., $\beta\leq1$, and a similar analysis can be carried out for the high-level interference. In order to characterize the achievable DGR of the general HK scheme, we have to specify the Multiplexing Gain Regions (MGRs) for which the condition $d_{1,\rm{HK}}>d_{1,\rm{CMO}}$ is satisfied for a non-empty set of the values of $t_2$ and $b$. This characterization is performed for asymmetric rates and stated in the following lemma.

\setcounter{theorem}{0}
\begin{lemma}
For the two-user ZIC, the following characterization specifies the MGRs and values of $t_2$ and $b$ for which $d_{1,\rm{HK}}$ is larger than $d_{1,\rm{CMO}}$.
\begin{enumerate}
\item For $\beta\geq r_1+2r_2$:\;
$d_{1,\rm{HK}}\leq d_{1,\rm{CMO}}$ \; for all values of $t_2$ and $b$.
\item For $r_1+r_2\leq\beta<r_1+2r_2$:\;
$d_{1,\rm{HK}}\geq d_{1,\rm{CMO}}$ \; for $b\geq \max\left\{r_1+t_2,2\beta-(r_1+2r_2)\right\}$ and all values of $t_2$.
\item For $\beta<r_1+r_2$:\;
$d_{1,\rm{HK}}\geq d_{1,\rm{CMO}}$ \; for $b\geq\beta-(\min\left\{r_2, 1-r_1\right\}-t_2)$ and all values of $t_2$.
\end{enumerate}
\end{lemma}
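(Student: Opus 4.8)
The plan is to work directly from the closed-form diversity expressions in Theorem~\ref{HKdiversities} and from \eqref{eq:CMOdiversities}, comparing $d_{1,\rm{HK}}(t_2,b)=\min\{d_{11,\rm{HK}},d_{12,\rm{HK}}\}$ against $d_{1,\rm{CMO}}=\min\{[1-r_1]^+,[1-r_1-r_2]^++[\beta-r_1-r_2]^+\}$. Since we are in the low-interference regime $\beta\le 1$ and $0\le r_i\le 1$, I would first dispose of the $[\cdot]^+$ operators by splitting the $(r_1,r_2,\beta)$ space into the relevant sub-cases according to the signs of $\beta-b$, $\beta-r_1-r_2$, $\beta-r_1-t_2$, $1-r_1-t_2$, etc. The key observation driving the three-way case split in the statement is how $d_{1,\rm{CMO}}$'s second term behaves: $[\beta-r_1-r_2]^+$ vanishes exactly when $\beta<r_1+r_2$, which is case~(3), while for $\beta\ge r_1+r_2$ it is active, and the threshold $\beta=r_1+2r_2$ is where $d_{1,\rm{CMO}}$ transitions between being dominated by $[1-r_1-r_2]^++[\beta-r_1-r_2]^+$ versus the structure achievable by $d_{12,\rm{HK}}$.

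For case~(1), $\beta\ge r_1+2r_2$, I would show $d_{11,\rm{HK}}=[1-r_1-[\beta-b]^+]^+$ and the two branches of $d_{12,\rm{HK}}$ are each $\le d_{1,\rm{CMO}}$ for every admissible $(t_2,b)$ with $0\le t_2\le r_2$. The cleanest route is: since $b$ large only helps $d_{11,\rm{HK}}$ by driving $[\beta-b]^+\to 0$, the binding constraint becomes $d_{12,\rm{HK}}$; in the branch $b\ge r_1+t_2$ one has $d_{12,\rm{HK}}=[1-r_1-t_2]^++[\beta-r_1-t_2]^+$, and I would check that this is maximized over $t_2\in[0,r_2]$ at an endpoint and that even the maximum does not exceed $d_{1,\rm{CMO}}$ precisely because $\beta\ge r_1+2r_2$ forces $[\beta-r_1-t_2]^+\ge r_2$ worth of slack to be ``spent'' matching the CMO term. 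For cases~(2) and~(3), the argument reverses: I would exhibit the stated lower bounds on $b$ and verify that under those bounds both $d_{11,\rm{HK}}$ and $d_{12,\rm{HK}}$ meet or exceed $d_{1,\rm{CMO}}$, and then argue necessity — that for $b$ below the threshold, $d_{11,\rm{HK}}$ (via $[\beta-b]^+$ too large) or $d_{12,\rm{HK}}$ (via the $b<r_1+t_2$ branch) drops strictly below $d_{1,\rm{CMO}}$. The specific thresholds $2\beta-(r_1+2r_2)$ in case~(2) and $\beta-(\min\{r_2,1-r_1\}-t_2)$ in case~(3) should emerge as exactly the $b$ value at which $d_{11,\rm{HK}}=d_{1,\rm{CMO}}$, i.e.\ $1-r_1-(\beta-b)=d_{1,\rm{CMO}}$, solved for $b$.

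The main obstacle I anticipate is the bookkeeping in case~(2): there $d_{1,\rm{CMO}}$ itself is a genuine minimum of two active expressions (neither $[1-r_1]^+$ nor the sum is automatically dominant), the $d_{12,\rm{HK}}$ branch boundary $b=r_1+t_2$ interacts with the separate threshold $2\beta-(r_1+2r_2)$, and one must take $\max$ over $t_2\in[0,r_2]$ while simultaneously the constraint on $b$ depends on $t_2$. I would handle this by treating $b\ge r_1+t_2$ (so the first branch of $d_{12,\rm{HK}}$ applies) and checking that the combined requirement $b\ge\max\{r_1+t_2,\,2\beta-(r_1+2r_2)\}$ is exactly what makes $\min\{d_{11,\rm{HK}},d_{12,\rm{HK}}\}\ge d_{1,\rm{CMO}}$ hold, then confirming no choice with $b<r_1+t_2$ can do better. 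A secondary subtlety is ensuring the claimed set of $(t_2,b)$ is non-empty, i.e.\ compatible with $0\le t_2\le r_2$ and $0\le r_1+t_2\le 1$; this is where the $\min\{r_2,1-r_1\}$ in case~(3) comes from, and I would verify it explicitly.
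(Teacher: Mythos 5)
Your overall strategy --- comparing the closed-form expressions $d_{1,\rm{HK}}=\min\{d_{11,\rm{HK}},d_{12,\rm{HK}}\}$ from (\ref{eq:HKdiversities}) against $d_{1,\rm{CMO}}$ from (\ref{eq:CMOdiversities}) by brute-force case analysis over the sign conditions and the branch boundary $b=r_1+t_2$ --- is exactly the paper's approach; the paper carries out the bookkeeping explicitly only for the MGR $r_1+r_2\le\beta<r_1+2r_2$, splitting into the sub-cases $r_1+t_2\le b<\beta$, $b<\min\{\beta,r_1+t_2\}$, and $b\ge\beta>r_1+t_2$, and the difficulties you anticipate there are the right ones.

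One concrete correction: you assert that both thresholds should emerge by solving $1-r_1-(\beta-b)=d_{1,\rm{CMO}}$, i.e.\ from the term $d_{11,\rm{HK}}$. That is true for case (2), where $d_{1,\rm{CMO}}=\left[1+\beta-2(r_1+r_2)\right]^+$ and the equation yields $b=2\beta-(r_1+2r_2)$. But in case (3), $d_{1,\rm{CMO}}=[1-r_1-r_2]^+$ and the $d_{11,\rm{HK}}$ equation gives $b=\beta-\min\{r_2,1-r_1\}$, with no $t_2$-dependence --- strictly weaker than the lemma's threshold whenever $t_2>0$. The $t_2$-dependent threshold $b\ge\beta-(\min\{r_2,1-r_1\}-t_2)$ actually comes from the other term, $d_{12,\rm{HK}}=\left[1-r_1-t_2-(\beta-b)\right]^+$ in the branch $b<r_1+t_2$; in the branch $b\ge r_1+t_2$ one has $d_{12,\rm{HK}}\ge[1-r_1-t_2]^+\ge d_{1,\rm{CMO}}$ automatically, so no condition arises there. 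You do mention the $b<r_1+t_2$ branch of $d_{12,\rm{HK}}$ in your necessity discussion, so the slip is recoverable, but as written your recipe would produce the wrong threshold for case (3).
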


\begin{proof}
\noindent For each MGR, we compare $d_{1,\rm{HK}}$ for the different values of $t_2$ and $b$ to the corresponding $d_{1,\rm{CMO}}$ using the equations given in (\ref{eq:HKdiversities}) and (\ref{eq:CMOdiversities}). We will derive the stated result for the MGR $r_1+r_2\leq\beta<r_1+2r_2$. Using similar arguments, we can prove the results for the other MGRs.

For $r_1+r_2\leq\beta<r_1+2r_2$, RX1 diversity of the CMO scheme is given by $d_{1,\rm{CMO}}=\left[1+\beta-2(r_1+r_2)\right]^+$, while for the HK scheme, we perform the following analysis.

\textbf{For $r_1+t_2\leq b<\beta$:}\vspace{-0.05in}
\begin{equation*}\small
\begin{split}
&d_{12,\rm{HK}}=\left[1+\beta-2(r_1+t_2)\right]^+\geq\left[1+\beta-2(r_1+r_2)\right]^+.\\
&d_{11,\rm{HK}}=\left[1-r_1-(\beta-b)\right]^+\geq\left[1+\beta-2(r_1+r_2)\right]^+, \\
&\;\;\;\; \text{for}\;\;b\geq2\beta-(r_1+2r_2).\qquad\;\;\text{Thus,}\;\; d_{1,\rm{HK}}\geq d_{1,\rm{CMO}}.
\end{split}
\end{equation*}\normalsize

\textbf{ For $b<\beta,\;\;b<r_1+t_2$:}\vspace{-0.05in}
\begin{equation*}\small
\begin{split}
d_{12,\rm{HK}}&=\left[1-r_1-t_2-(\beta-b)\right]^+\leq\left[1-\beta\right]^+\\
&\leq\left[1+\beta-2(r_1+r_2)\right]^+,\\
\text{thus,}\;\; &d_{1,\rm{HK}}\leq d_{1,\rm{CMO}}.
\end{split}
\end{equation*}\normalsize

\textbf{For $b\geq\beta>r_1+t_2$:}\vspace{-0.05in}
\begin{equation*}\small
\begin{split}
&d_{11,\rm{HK}}=\left[1-r_1\right]^+\geq\left[1+\beta-2(r_1+r_2)\right]^+\\
&d_{12,\rm{HK}}=\left[1+\beta-2(r_1+t_2)\right]^+ \geq \left[1+\beta-2(r_1+r_2)\right]^+,\\
&\text{thus,}\;\;d_{1,\rm{HK}}\geq d_{1,\rm{CMO}}.
\end{split}
\end{equation*}\normalsize

\noindent Therefore, we have $d_{1,\rm{HK}}\geq d_{1,\rm{CMO}}$ for $b\geq\max\left\{r_1+t_2,2\beta-(r_1+2r_2)\right\}$ and for all values of $t_2$.
\end{proof}

After doing the previous characterization, we will derive closed-form expressions for the achievable tradeoff curve between $d_{1,\rm{HK}}$ and $d_{2,\rm{HK}}$ for the MGRs where $d_{1,\rm{HK}}$ can be larger than $d_{1,\rm{CMO}}$. This tradeoff is characterized through optimization of the simultaneously achievable RX1 and RX2 diversities of the HK scheme over the previously specified values of the splitting parameters $t_2$ and $b$. Our results are summarized in the following theorem.

\begin{theorem}
The tradeoff curve between $d_{1,\rm{HK}}$ and $d_{2,\rm{HK}}$ is characterized for the different MGRs as follows.\\\\
\textbf{For $r_1+r_2\leq\beta<r_1+2r_2$:}
\footnotesize
\begin{equation}
\begin{split}
&d_{2,\rm{HK}}=\\
&\begin{cases}
\left[1-r_1-\max\left\{r_2,2\left(\beta-r_1-r_2\right)\right\}\right]^+,\qquad\text{if}\;\;\;a_{11}\leq d_{1,\rm{HK}}\leq a_{12}\\
\text{using}\;\;b=d_{1,\rm{HK}}-1+r_1+\beta\;\;\text{and}\;\;t_2=d_{1,\rm{HK}}-1+\beta\\\\
\frac{1}{2}\left[5-\beta-4r_1-2r_2-3d_{1,\rm{HK}}\right]^+,\qquad\text{if}\;\;\;a_{12}\leq d_{1,\rm{HK}}\leq a_{13}\\
\text{using}\;\;b=d_{1,\rm{HK}}-1+r_1+\beta\;\;\text{and}\;\;t_2=\frac{1}{2}\left(1+\beta-2r_1-d_{1,\rm{HK}}\right).
\end{cases}
\end{split}
\end{equation}
\normalsize
\textbf{For $\beta<r_1+r_2$:}
\footnotesize
\begin{equation}
\begin{split}
&d_{2,\rm{HK}}=\\
&\begin{cases}
\left[1-\max\left\{r_2,\beta\right\}\right]^+,\qquad\text{if}\;\;\;a_{21}\leq d_{1,\rm{HK}}<a_{22}\\
\text{using}\;\;b=\beta\;\;\text{and}\;\;t_2=1-r_1-d_{1,\rm{HK}}\\\\
\left[2-r_1-r_2-\beta-d_{1,\rm{HK}}\right]^+,\qquad\text{if}\;\;\;a_{22}\leq d_{1,\rm{HK}}<a_{23}\\
\text{using}\;\;b=\beta\;\;\text{and}\;\;t_2=1-r_1-d_{1,\rm{HK}}\\\\
\left[1-\min\left\{r_1,\beta\right\}-r_2\right]^+,\qquad\text{if}\;\;a_{23}\leq d_{1,\rm{HK}}<a_{24}\\
\text{using}\;\;b=d_{1,\rm{HK}}-1+r_1+\beta\;\;\text{and}\;\;t_2=d_{1,\rm{HK}}-1+\beta\\\\
\frac{1}{2}\left[5-\beta-4r_1-2r_2-3d_{1,\rm{HK}}\right]^+,\;\;\;\text{if}\;\;a_{24}\leq d_{1,\rm{HK}}\leq a_{25}\\
\text{using}\;\;b=d_{1,\rm{HK}}-1+r_1+\beta\;\;\text{and}\;\;t_2=\frac{1}{2}\left(1+\beta-2r_1-d_{1,\rm{HK}}\right),
\end{cases}
\end{split}
\end{equation}
\normalsize
where,
\footnotesize
\begin{equation}
\begin{split}
&a_{11}=\left[1+\beta-2\left(r_1+r_2\right)\right]^+\\
&a_{12}=\max\left\{\left[1+\beta-2\left(r_1+r_2\right)\right]^+,\left[1-\frac{1}{3}\left(\beta+2r_1\right)\right]^+\right\}\\
&a_{13}=\left[1-r_1\right]^+,\qquad a_{21}=\left[1-r_1-r_2\right]^+\\
&a_{22}=\left[1-r_1-\min\left\{r_2,\beta\right\}\right]^+,\qquad a_{23}=\left[1-\max\left\{r_1,\beta\right\}\right]^+\\
&a_{24}=\left[1-\max\left\{r_1,\frac{1}{3}\left(\beta+2r_1\right)\right\}\right]^+,\qquad a_{25}=\left[1-r_1\right]^+.
\end{split}
\end{equation}
\normalsize
\end{theorem}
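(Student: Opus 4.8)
The plan is to treat $d_{1,\rm{HK}}$ as the free parameter and, for each admissible value, solve the one–variable-reduced optimization
$d_{2,\rm{HK}}=\max\,d_{2,\rm{HK}}(t_2,b)$ over the level set $\{(t_2,b):d_{1,\rm{HK}}(t_2,b)=d_{1,\rm{HK}}\}$, using only the closed forms (\ref{eq:HKdiversities}) and (\ref{eq:CMOdiversities}) and the region description from Lemma~1. The first thing I would record is the monotonicity structure that makes this tractable: $d_{21,\rm{HK}}=[1-r_2]^+$ is constant in $(t_2,b)$; $d_{22,\rm{HK}}=[1-r_2-b+t_2]^+$ is non-increasing in $b$ and non-decreasing in $t_2$; while $d_{1,\rm{HK}}=\min\{d_{11,\rm{HK}},d_{12,\rm{HK}}\}$ is non-decreasing in $b$ (through the $[\beta-b]^+$ terms, up to saturation at $b=\beta$) and non-increasing in $t_2$ on every branch of $d_{12,\rm{HK}}$. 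Hence, to hold $d_{1,\rm{HK}}$ at a prescribed level $d^*$ while maximizing $d_{2,\rm{HK}}$, one should take the \emph{smallest} $b$ compatible with $d_{1,\rm{HK}}\ge d^*$ and then push $t_2$ as \emph{large} as that same constraint permits; the two-dimensional search collapses onto the boundary curve $d_{1,\rm{HK}}(t_2,b)=d^*$.

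Next I would invoke Lemma~1 to restrict attention to the MGRs where HK can help: the case $\beta\ge r_1+2r_2$ gives $d_{1,\rm{HK}}\le d_{1,\rm{CMO}}$ everywhere and produces no curve, so only $r_1+r_2\le\beta<r_1+2r_2$ and $\beta<r_1+r_2$ remain. In each, $d_{1,\rm{CMO}}$ simplifies ($=[1+\beta-2(r_1+r_2)]^+=a_{11}$ in the first, $=[1-r_1-r_2]^+=a_{21}$ in the second), and these are exactly the lower endpoints of the relevant range of $d_{1,\rm{HK}}$; the upper endpoint is $\max_{t_2,b}d_{1,\rm{HK}}=[1-r_1]^+$, attained as $b\to\infty$ with $t_2=0$. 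On this interval I would split into subcases according to (i) whether $d_{11,\rm{HK}}$ or $d_{12,\rm{HK}}$ is the binding term of $d_{1,\rm{HK}}$ on the level curve, (ii) which branch of the piecewise $d_{12,\rm{HK}}$ is active, i.e.\ $b<r_1+t_2$ versus $b\ge r_1+t_2$, (iii) whether $b<\beta$ or $b\ge\beta$, and (iv) whether $d_{21,\rm{HK}}$ or $d_{22,\rm{HK}}$ is binding in $d_{2,\rm{HK}}$. On each piece, setting the binding $d_{1}$-term equal to $d_{1,\rm{HK}}$ pins $b$ as an affine function of $d_{1,\rm{HK}}$ — this is the origin of $b=d_{1,\rm{HK}}-1+r_1+\beta$ (from $d_{11,\rm{HK}}=d_{1,\rm{HK}}$ with $b<\beta$) and of $b=\beta$ (from $d_{12,\rm{HK}}=d_{1,\rm{HK}}$ with the $[\beta-b]^+$ term saturated) — after which the constraint that the other $d_1$-term stays $\ge d_{1,\rm{HK}}$ gives the largest admissible $t_2$. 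Substituting $(t_2,b)$ into $\min\{d_{21,\rm{HK}},d_{22,\rm{HK}}\}$ yields a candidate, and the upper envelope of the candidates over the subcases is the claimed formula.

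The internal breakpoints would then be identified as the $d_{1,\rm{HK}}$-values at which the active constraint changes: $a_{23}=[1-\max\{r_1,\beta\}]^+$ is where the $b$ solving $d_{11,\rm{HK}}=d_{1,\rm{HK}}$ drops below $\beta$ so the $[\beta-b]^+$ term turns on; $a_{12}$ and $a_{24}=[1-\max\{r_1,\tfrac13(\beta+2r_1)\}]^+$ are where the optimal $t_2$ becomes interior (governed by $d_{12,\rm{HK}}=d_{1,\rm{HK}}$ in the $b<r_1+t_2$ branch), which is where the slope-$(-\tfrac32)$ branch $\tfrac12[5-\beta-4r_1-2r_2-3d_{1,\rm{HK}}]^+$ takes over; and $a_{22}=[1-r_1-\min\{r_2,\beta\}]^+$ is where $d_{22,\rm{HK}}$ overtakes $d_{21,\rm{HK}}=[1-r_2]^+$ as the binding term. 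I would close by checking continuity of $d_{2,\rm{HK}}$ across each breakpoint (adjacent expressions agree), verifying that the prescribed $(t_2,b)$ lie in the admissible set of Lemma~1 and satisfy $s_2,t_2\ge0$, $0\le r_i\le1$, and collapsing all $[\cdot]^+$ operations under the standing assumption $\beta\le1$ to obtain the stated piecewise forms and the ordering $a_{\cdot1}\le\cdots\le[1-r_1]^+$.

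The main obstacle I anticipate is not any single computation but the combinatorial bookkeeping: $d_{1,\rm{HK}}$ is a minimum of two terms, one of which is itself piecewise in $b$ versus $r_1+t_2$, each term carries one or two $[\cdot]^+$ truncations, and $d_{2,\rm{HK}}$ is another minimum of two truncated affine functions — so a disciplined case tree is needed to be certain that every region of the $(t_2,b,d_{1,\rm{HK}})$ space is covered and that the ``smallest $b$, largest $t_2$'' reduction remains valid on branches where a $[\cdot]^+$ has already saturated to zero (a place where the naive monotonicity argument can mislead). Ensuring the $a_{ij}$ are correctly ordered so that the pieces actually tile $[a_{\cdot1},[1-r_1]^+]$, and that the envelope of subcase-candidates is exactly the minimum written in the theorem, is where the real care is required; the monotonicity observations in the first paragraph are the tool that keeps the argument finite and structured.
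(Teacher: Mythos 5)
Your proposal follows essentially the same route as the paper's (sketched) proof: restrict to the $(t_2,b)$ region identified in Lemma~1, parameterize by the level of $d_{1,\rm{HK}}$, and for each level pick the point on the level set that maximizes $d_{2,\rm{HK}}=\left[1-r_2-(b-t_2)\right]^+$, which pins $b$ and $t_2$ as affine functions of $d_{1,\rm{HK}}$ and yields the breakpoints $a_{ij}$ where the binding term changes. The paper organizes the identical case analysis geometrically via the regions and level lines of Fig.~2 rather than via your monotonicity bookkeeping, but the substance is the same.
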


\begin{proof}
(Sketch)

\textbf{For $r_1+r_2\leq\beta<r_1+2r_2$:}\\
We perform the optimization of the achievable DGR for this MGR over the values of $b\geq\max\left\{r_1+t_2,2\beta-(r_1+2r_2)\right\}$ and for all values of $t_2$ where $t_2\in(0,r_2)$. Fig. \ref{fig2:proofa} specifies $d_{1,\rm{HK}}$ for each pair of these values of $t_2$ and $b$ and for the MGR $r_1+r_2\leq\beta<r_1+\frac{3}{2}r_2$. For each $d_{1,\rm{HK}}$ greater than $d_{1,\rm{CMO}}$, we choose the values of $t_2$ and $b$ which maximize $d_{2,\rm{HK}}$. In Fig. \ref{fig2:proofa}, all the points ($t_2,b$) on each horizontal line in regions 1 and 3 produce the same $d_{1,\rm{HK}}$. Also, in regions 2 and 4, all the points ($t_2,b$) on each vertical line produce the same $d_{1,\rm{HK}}$. Regions 1 and 2 give the same range of $d_{1,\rm{HK}}$ and choosing the points ($t_2,b$) on the line $b=r_1+t_2$ maximizes $d_{2,\rm{HK}}$ in these two regions. While for regions 3 and 4, the horizontal and vertical lines intersect in the line $b=2\beta-(r_1+2t_2)$ which gives the maximum $d_{2,\rm{HK}}$ for each $d_{1,\rm{HK}}$ achieved using any point ($t_2,b$) on any horizontal line in region 3 or any vertical line in region 4. Note that for this MGR, we always use $d_{2,\rm{HK}}=[1-r_2-(b-t)]^+$. Using the equations of $d_{1,\rm{HK}}$ and $d_{2,\rm{HK}}$ in those regions and the chosen values for $t_2$ and $b$, we can easily derive the expressions of the tradeoff curve for this MGR. When $r_1+\frac{3}{2}r_2\leq\beta<r_1+2r_2$, regions 1 and 2 in Fig. \ref{fig2:proofa} vanish, and thus the constant line on the tradeoff curve (See Fig. \ref{generalHK1}) will also vanish.

\textbf{For $\beta\leq r_1+r_2$:}\\
Using Fig. \ref{fig2:proofb} and similar arguments, we can derive the tradeoff curves for this MGR.
\end{proof}

\begin{figure}
\centering
\subfigure[$r_1+r_2<\beta<r_1+\frac{3}{2}r_2$]{
\includegraphics[width=76mm, height = 40mm]{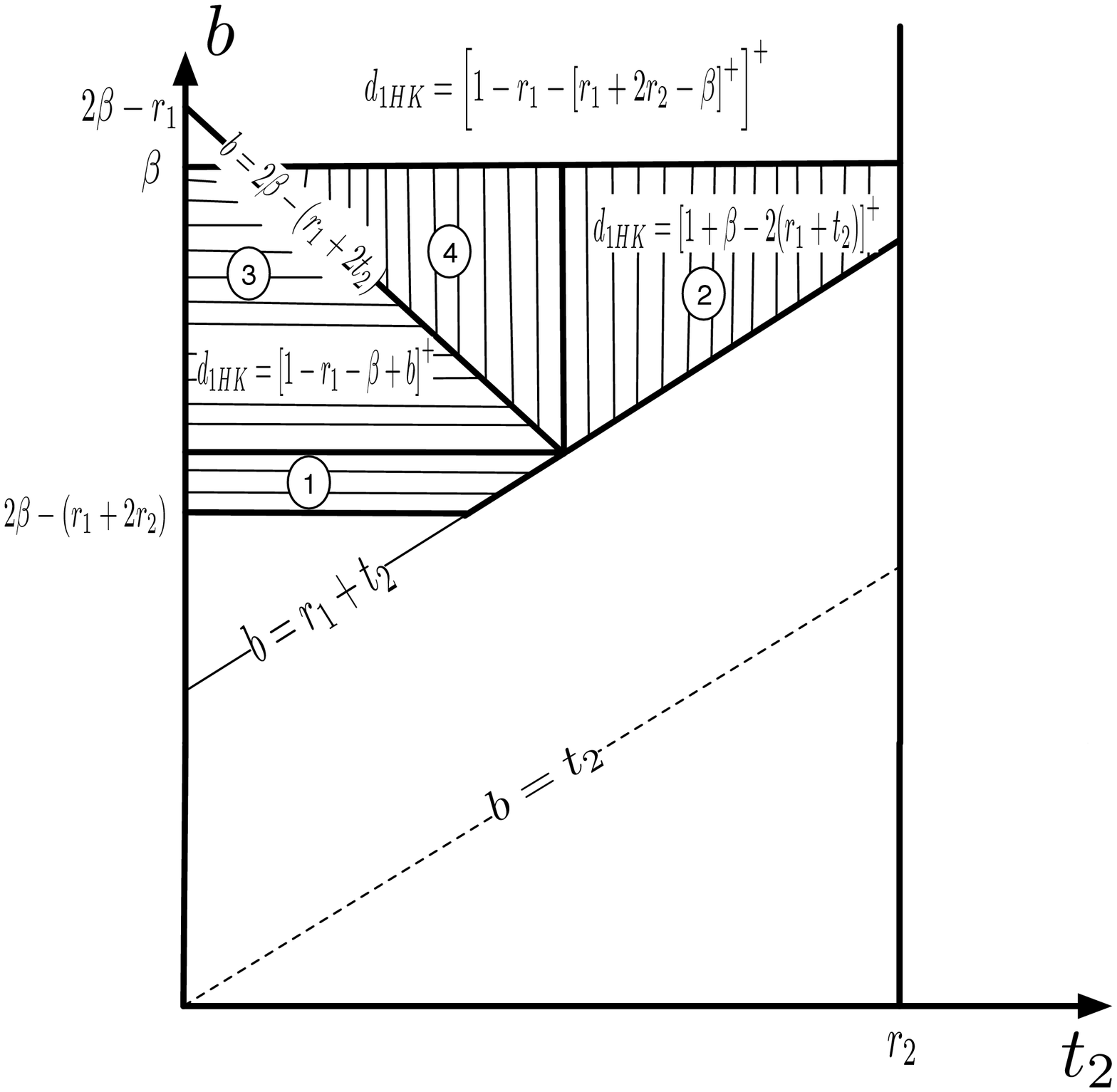}
\label{fig2:proofa}
 }
\subfigure[$r_1<\beta<r_2$]{
\includegraphics [width=80mm, height = 40mm]{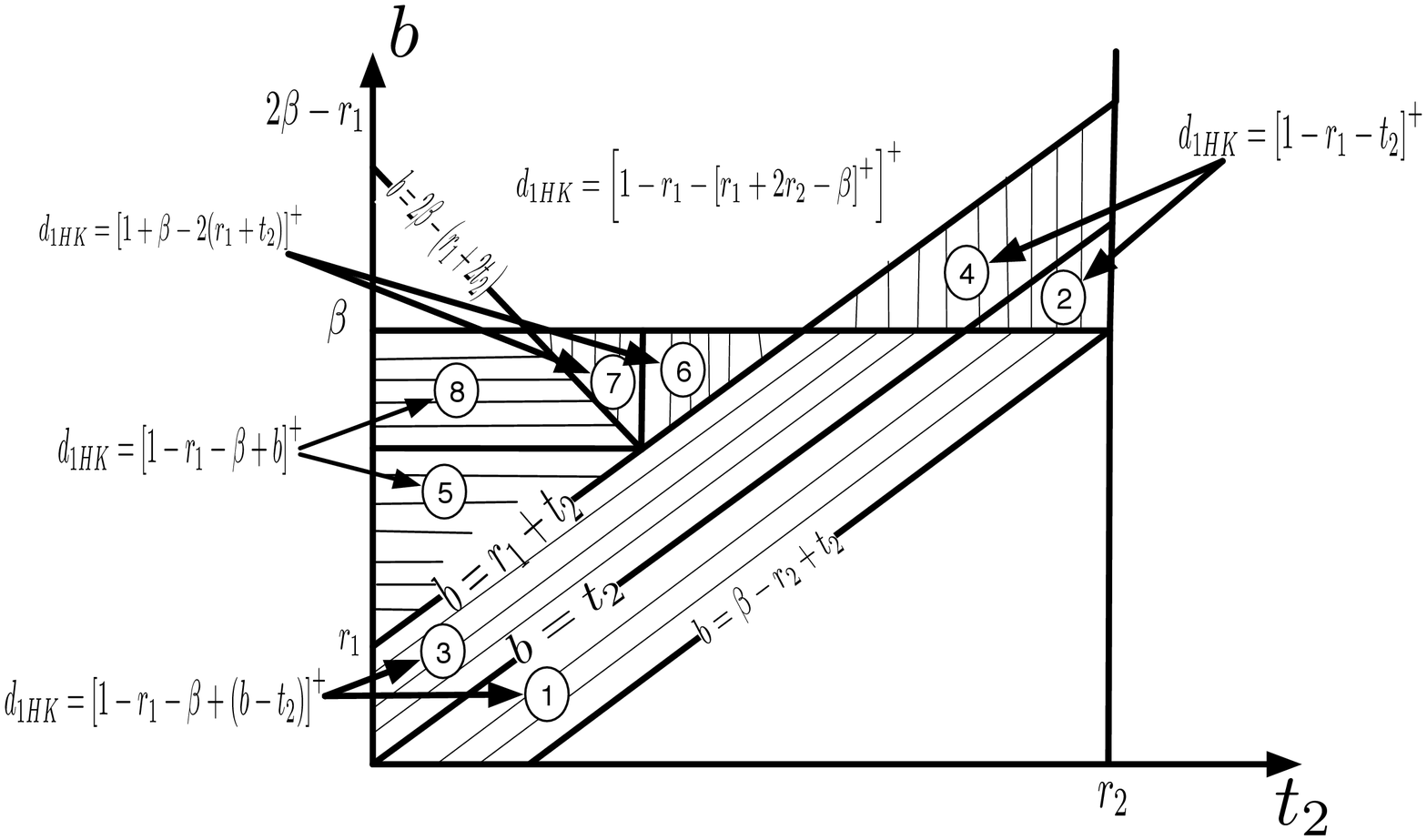}
\label{fig2:proofb}
 }
\caption{$d_{1,\rm{HK}}$ for the values of $t_2$ and $b$ which satisfy $d_{1,\rm{HK}}\geq d_{1,\rm{CMO}}$.}
\vspace{-.2in}
\end{figure}

The achievable tradeoff curves using the general HK scheme for the MGRs $r_1+r_2\leq\beta<r_1+\frac{3}{2}r_2$ and $r_1<\beta<r_2$ are shown in Fig. \ref{generalHK}.

\begin{figure}
\centering
\subfigure[$r_1+r_2\leq\beta<r_1+\frac{3}{2}r_2$]{
   \includegraphics[width=75mm, height = 30mm] {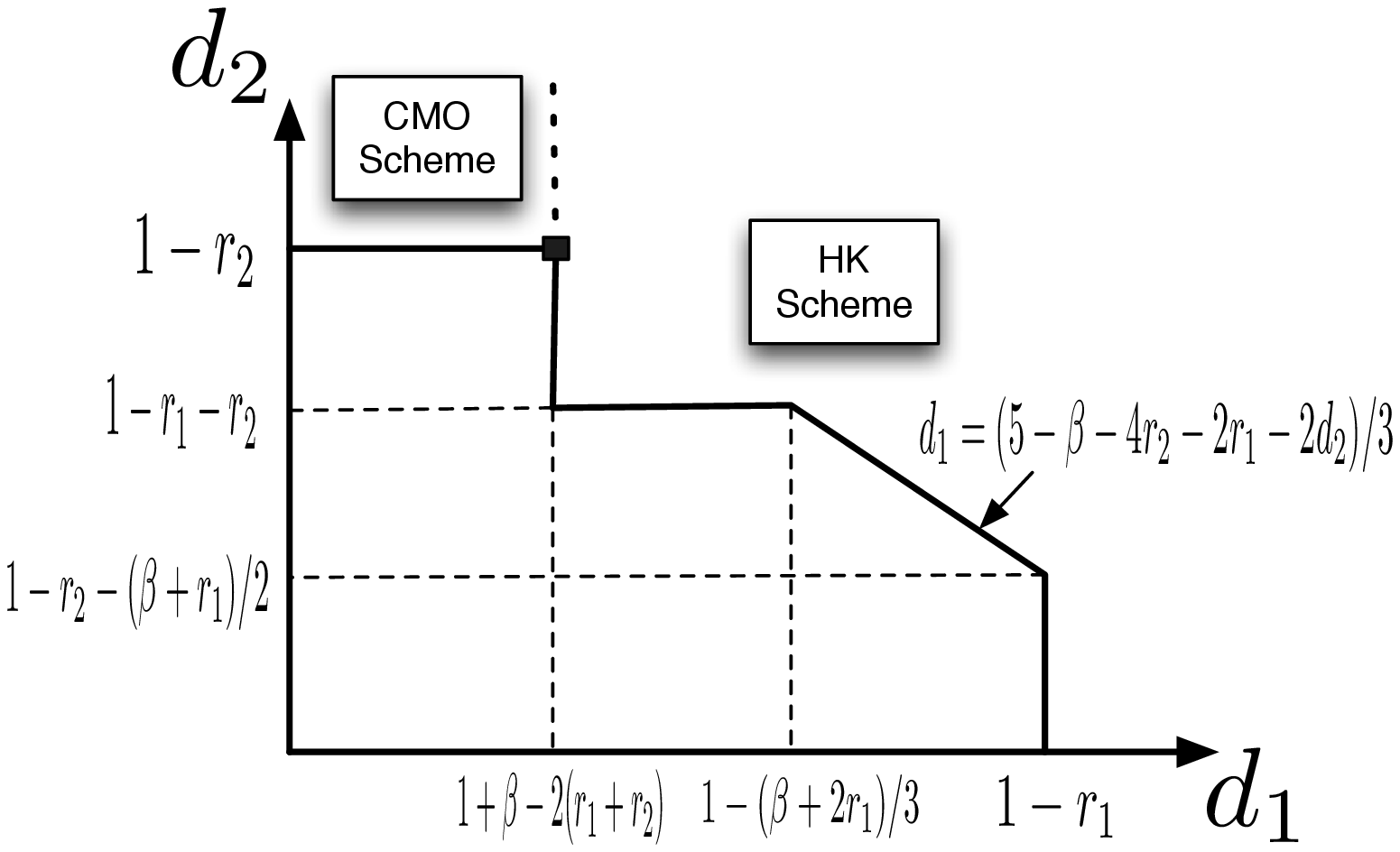}
   \label{generalHK1}
 }
  \subfigure[$r_1<\beta<r_2$]{
   \includegraphics[width=75mm, height = 30mm] {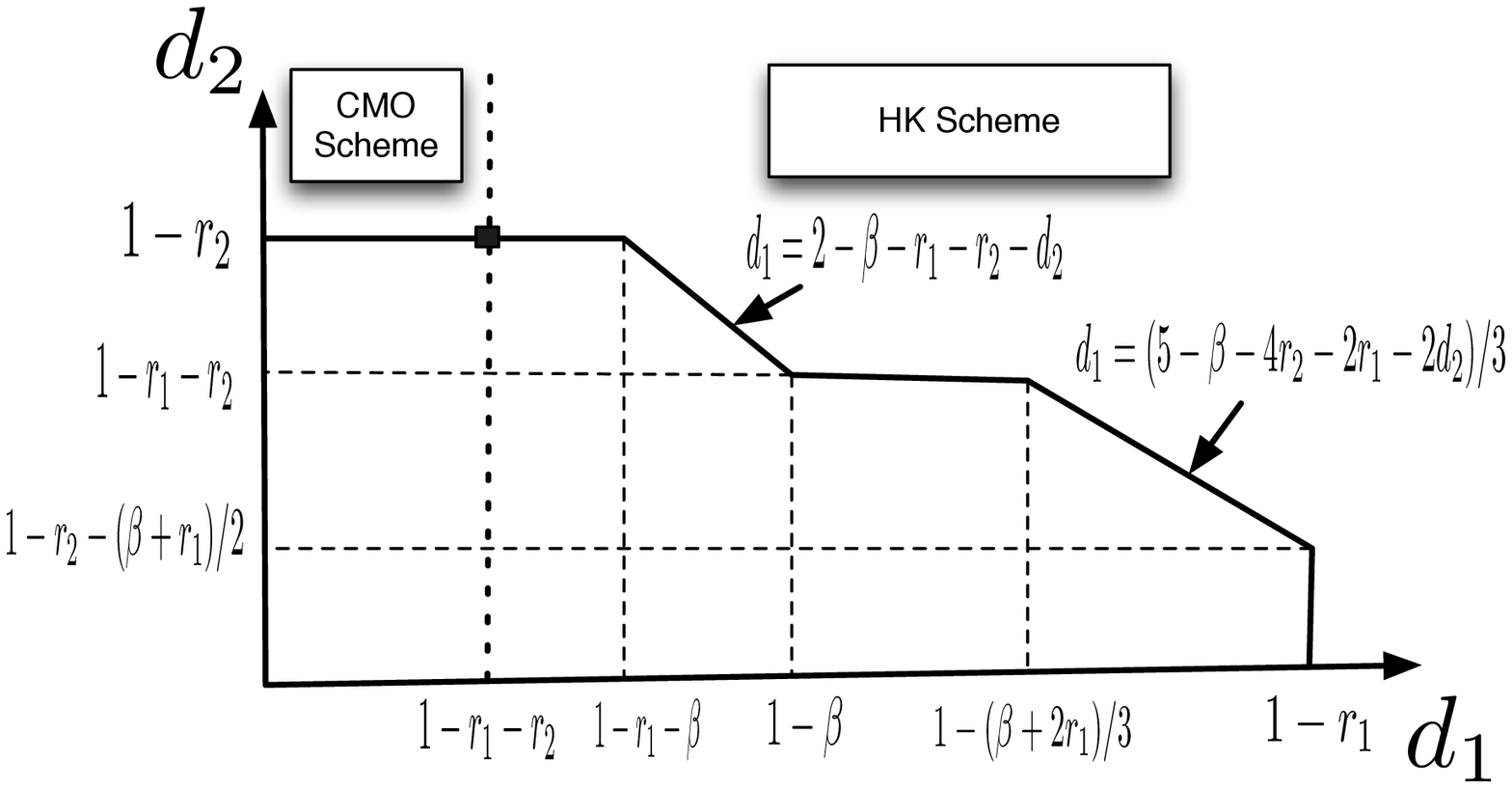}
   \label{generalHK2}
 }
\caption{The tradeoff curves of the general HK scheme for different MGRs.}
\label{generalHK}
\end{figure}

\section{Generalized Time Sharing of The HK Scheme}\label{TS}
In this section, we consider a generalized time sharing of the HK scheme. For TX2, we divide the block length into $L$ slots, $L\in\left\{1,2,3,\cdots\right\}$. For each slot, we consider different assignments for the rates and powers of the common and private messages via different encoding of the information bits. The constraint on the powers of the common and private messages in each time slot can be written as $P_{2,common}+P_{2,private}\leq P_2$. Thus, the case of having TX2 sending nothing in a particular slot can be considered. To ensure full transmission of TX2 data, the constraint on the rates is that the sum of the rates of the common and private messages in the whole $L$ slots has to be greater than or equal to $R_2$ with equality if and only if the pieces of information sent in the different slots are independent.

\begin{theorem}
Generalized time sharing of the general HK scheme does not improve the DGR achieved by fixed rate and power assignments for the common and private messages.
\end{theorem}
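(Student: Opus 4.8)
The plan is to show that, since $L=1$ already reproduces the fixed-power-split scheme of Theorem~\ref{HKdiversities}, generalized time sharing cannot enlarge the achievable DGR. First I would parametrize a time-sharing strategy by slot fractions $\lambda_\ell\geq0$, $\sum_{\ell=1}^{L}\lambda_\ell=1$, and per-slot exponents $(t_2^{(\ell)},s_2^{(\ell)},b^{(\ell)})$ for the common rate, the private rate and the power split, keeping $b^{(\ell)}\geq0$ as in Section~\ref{DGRachive} and writing $r_2^{(\ell)}=t_2^{(\ell)}+s_2^{(\ell)}$. The key structural observation is that all $L$ slots lie in one fading block and hence see the same realization $(h_{11},h_{21},h_{22})$; consequently the outage event of any codeword spanning several slots is determined by this single realization, so coding the TX2 information jointly across slots cannot improve the diversity beyond what the best single slot delivers. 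This reduces the problem to independent per-slot pieces, which fixes the rate budget as $\sum_\ell\lambda_\ell r_2^{(\ell)}=r_2$, with $r_1$ unchanged since TX1 does not time-share.

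Next I would compute the two time-sharing diversities. Because the pieces are independent and decoded slot by slot, RX1's co-decoding aside, RX2 is in outage iff some slot's piece fails, so reading off the per-slot constraints behind (\ref{eq:ORHK}) gives $d_{2}^{\rm{TS}}=\min_\ell d_{2,\rm{HK}}(t_2^{(\ell)},b^{(\ell)})$ with $d_{2,\rm{HK}}(t_2^{(\ell)},b^{(\ell)})=\min\{[1-r_2^{(\ell)}]^+,[1-s_2^{(\ell)}-b^{(\ell)}]^+\}$ as in (\ref{eq:HKdiversities}). For RX1 the single TX1 codeword, and likewise each decoded TX2 common sub-message, spans the whole block, so the relevant mutual informations become the $\lambda$-weighted convex combinations of the per-slot ones: the outage region $R_{1,\rm{HK}}^{\rm{TS}}$ is obtained from the RX1 constraints of (\ref{eq:ORHK}) by replacing each right-hand side $f(\gamma_{11},\gamma_{21};b)$ with $\sum_\ell\lambda_\ell f(\gamma_{11},\gamma_{21};b^{(\ell)})$, and $d_1^{\rm{TS}}=\inf_{R_{1,\rm{HK}}^{\rm{TS}}}(\gamma_{11}+\gamma_{21}+\gamma_{22})$.

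The core of the argument is to dominate $(d_1^{\rm{TS}},d_2^{\rm{TS}})$ by a fixed assignment. I would try $t_2^\star=\sum_\ell\lambda_\ell t_2^{(\ell)}$ together with $b^\star=\max_\ell b^{(\ell)}$. Since the two RX1 right-hand-side functions in (\ref{eq:ORHK}), namely $[1-\gamma_{11}-(\beta-\gamma_{21}-b)^+]^+$ and $[\max\{[1-\gamma_{11}]^+,[\beta-\gamma_{21}]^+\}-(\beta-\gamma_{21}-b)^+]^+$, are non-decreasing in $b$, each $\lambda$-average is pointwise at most its value at $b^\star$, so $R_{1,\rm{HK}}(t_2^\star,b^\star)\subseteq R_{1,\rm{HK}}^{\rm{TS}}$ and hence $d_1^{\rm{TS}}\leq d_{1,\rm{HK}}(t_2^\star,b^\star)$. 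For RX2, the inequality $\max_\ell r_2^{(\ell)}\geq r_2$ forces the first branch $[1-r_2]^+$ of $d_{2,\rm{HK}}(t_2^\star,b^\star)$ to be at least $d_2^{\rm{TS}}$; when its second branch $[1-(r_2-t_2^\star)-b^\star]^+$ instead drops below $d_2^{\rm{TS}}$, I would fall back to the CMO special case, whose RX2 diversity $d_{2,\rm{CMO}}=[1-r_2]^+$ is the interference-free upper bound on any achievable $d_2$, and verify via (\ref{eq:CMOdiversities}) that $d_1^{\rm{TS}}\leq d_{1,\rm{CMO}}=\min\{[1-r_1]^+,[1-r_1-r_2]^++[\beta-r_1-r_2]^+\}$ in exactly those multiplexing-gain regions where this fallback is needed. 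Either way $(d_1^{\rm{TS}},d_2^{\rm{TS}})$ lies in the DGR achievable by fixed rate and power assignments, which is the claim.

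The hard part is this last reconciliation: RX1's diversity is governed by a convex combination of per-slot mutual informations while RX2's is governed by its worst slot, and the two users prefer opposite extremes of $b^\star$ (a large $b$ helps RX1 but hurts RX2), so no single genuine HK assignment dominates the time-sharing scheme in general — one is forced to pair the candidate $(t_2^\star,\max_\ell b^{(\ell)})$ with the CMO (and, for completeness, the TIAN, $b=t_2=0$) special cases and carry out the MGR-dependent case analysis, which is made awkward by the $[\,\cdot\,]^+$ operators that block a one-line Jensen/convexity argument. A secondary technical point requiring care is the justification of the first-step reduction — that neither joint coding across slots nor unequal slot lengths can help — which hinges on the single-realization block-fading assumption.
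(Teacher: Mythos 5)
Your proposal has a genuine gap at its very first step: the claim that, because all slots share one fading realization, joint coding of TX2's data across slots ``cannot improve the diversity beyond what the best single slot delivers,'' so that the problem ``reduces to independent per-slot pieces.'' This is backwards. When the same information is carried in several slots (with independent codewords per slot), the relevant outage event is $R>\sum_\ell\lambda_\ell I_\ell$, and since the $\lambda$-average of the per-slot mutual informations is at least their minimum, this event is \emph{contained in} the worst-slot event; correlated coding across slots is therefore exactly the case that might enlarge the DGR and is the one that must be analyzed. The paper's proof does precisely this: for each of three correlation scenarios it writes the RX1 and RX2 outage events as $\lambda$-weighted sums of per-slot log terms and shows that their high-$\textsc{SNR}$ approximations coincide with the fixed-power-split outage events of (\ref{eq:ORHK}) with effective parameters $b_c=\lambda b_1+(1-\lambda)b_2$ and $t_c=\lambda t_{21}$ — i.e., every time-sharing DGR point is already a fixed-assignment DGR point, and no dominance or comparison argument is needed at all. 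Your proposal also treats the two receivers inconsistently (convex combination of mutual informations at RX1, worst-slot minimum at RX2), which does not correspond to any single coherent coding scheme.

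Even granting your reduction, the closing step does not go through, as you yourself concede: the candidate $(t_2^\star,\max_\ell b^{(\ell)})$ does not dominate both diversities simultaneously, and the fallback to CMO is asserted rather than verified across the multiplexing-gain regions. Finally, because CMO is a \emph{singular} point of the HK family (its diversities are not obtained by substituting $b=\infty$, $t_2=r_2$ into (\ref{eq:HKdiversities})), time sharing between a CMO slot and an HK slot must be handled as a separate case with its own outage events, as the paper does in (\ref{eq:CMO-HK-RX1})--(\ref{eq:CMO-HK-RX2}); your argument does not cover this configuration.
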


\begin{proof}
First, we will consider the case where the block length $l$ is divided into two slots; the first slot consists of $\lambda l$ symbols while the second one consists of $(1-\lambda)l$ symbols, where $0\leq\lambda\leq1$. We use the HK scheme with different rate and power assignments in each slot. It is obvious that when TX2 sends independent information bits through the two time slots, the overall individual error performance will be dominated by the worst performance in the  two time slots for both users; the individual error events for the common and private messages in the two slots still exist. Therefore, we concentrate on the case when TX2 sends correlated data in the two slots over the common and private messages but using independent codewords for each slot. In other words, some of the information is sent twice over the two time slots. To generalize the latter case, we have to consider the following three scenarios.
\begin{itemize}
\item Scenario 1: All of the information bits of TX2 are sent in each slot. In this scenario, TX2 encodes the information bits into three messages $m_1$, $m_2$, and $m_3$ and sends them over the common and private messages in the two time slots as shown in Fig. \ref{scenario1}. The message from TX1 is denoted by $m$.
\item Scenario 2: TX2 does not send all of the information bits in any slot but keeps some dependency between the information bits in the first and second slots. We can generally represent this scenario as shown in Fig. \ref{scenario2}. In this scenario, we consider that TX2 sends the same information message $m_1$ over the common messages in the two slots to minimize the amount of information to be decoded at RX1. On the other hand, the way we distribute the information messages over the common and private messages in the two slots will not affect the possible outage events at RX2 in the high-$\textsc{SNR}$ scale; however the power of the common message is much greater than that of the private message, both powers are very large. We also allow for some overlapping between the private messages in the two slots through $m_3$.
\item Scenario 3: TX2 introduces some redundancy through the common and private messages in each slot. This scenario is represented in Fig. \ref{scenario3}.
\end{itemize}

Let $T_{21}$, $S_{21}$, $T_{22}$, and $S_{22}$ be the rates of the common and private messages in the first and second slots, respectively, where $T_{21}=t_{21}\log \textsc{SNR}$ and $S_{21}$, $T_{22}$, and $S_{22}$ are similarly defined. Also, $R_1$ is the rate of the message $m$ from TX1.
For the first scenario, it is shown in the Appendix that the high-\textsc{SNR} approximation of the outage region at RX1 includes the following
outage events
\begin{equation}\small
\begin{split}
&r_1>[1-\gamma_{11}-[{\beta-\gamma_{21}-b_c}]^+]^+\\
&r_1+t_c>\left[\max\left\{[1-\gamma_{11}]^+,[\beta-\gamma_{21}]^+\right\}-[\beta-\gamma_{21}-b_c]^+\right]^+,
\end{split}
\label{eq:largesnrRX1}
\end{equation}\normalsize
where, $b_c=\lambda b_1+(1-\lambda)b_2$ and $t_c=\lambda t_{21}$.

The high-\textsc{SNR} approximation of the outage region at RX2 includes the following outage events
\begin{equation}\small
r_2>[1-\gamma_{22}]^+,\qquad r_2-t_c>[1-\gamma_{22}-b_c]^+.
\label{eq:largesnrRX2}
\end{equation}\normalsize

Note that these outage events are similar to the outage events of the HK scheme given in (\ref{eq:ORHK}),
but with the parameters $b_c$ and $t_c$ instead of $b$ and $t_2$. Since $b_c$ is a weighted sum of $b_1$ and $b_2$ while $t_c=\lambda t_{21}$, the DGR of the two-slots HK scheme with any choice of $b_1$, $b_2$, $t_{21}$, and $t_{22}$ cannot be increased over that of the fixed-power split HK scheme.

\begin{figure}
\centering
\subfigure[HK scheme, scenario 1]{
\includegraphics [scale=0.3]{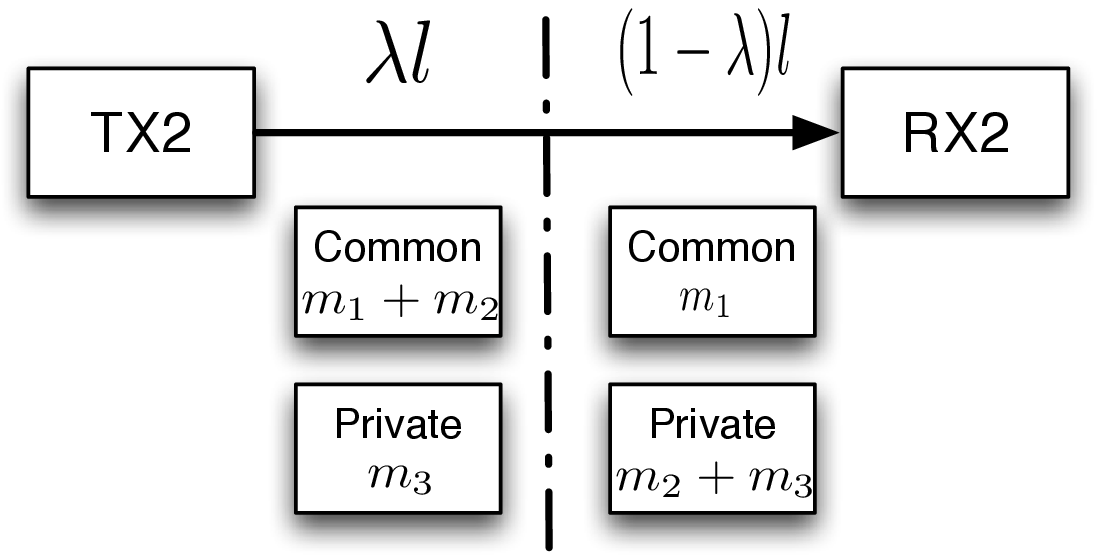}
\label{scenario1}
 }
\subfigure[HK scheme, scenario 2]{
\includegraphics [scale=0.3]{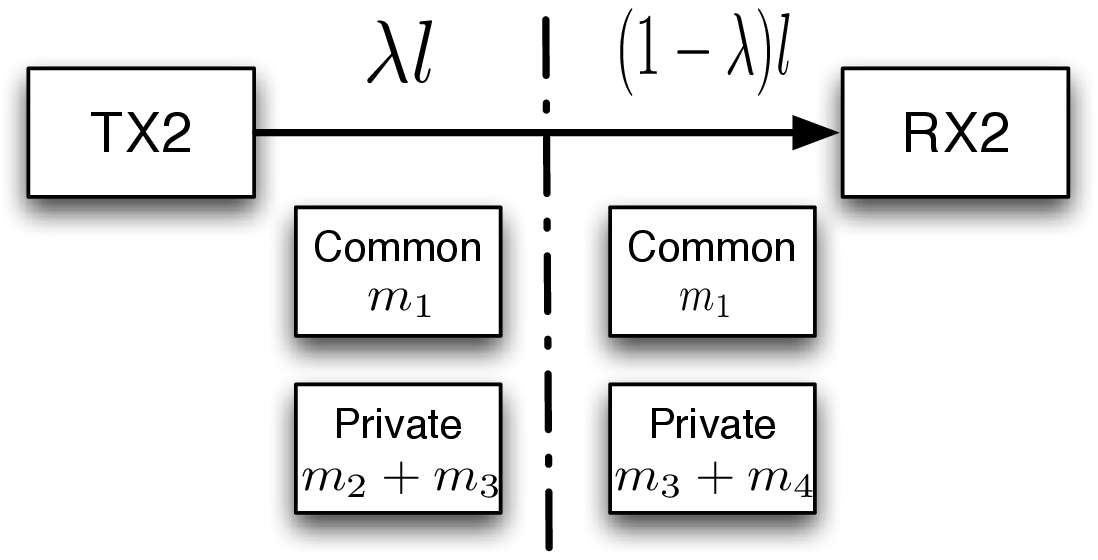}
\label{scenario2}
 }
\subfigure[HK scheme, scenario 3]{
\includegraphics [scale=0.3]{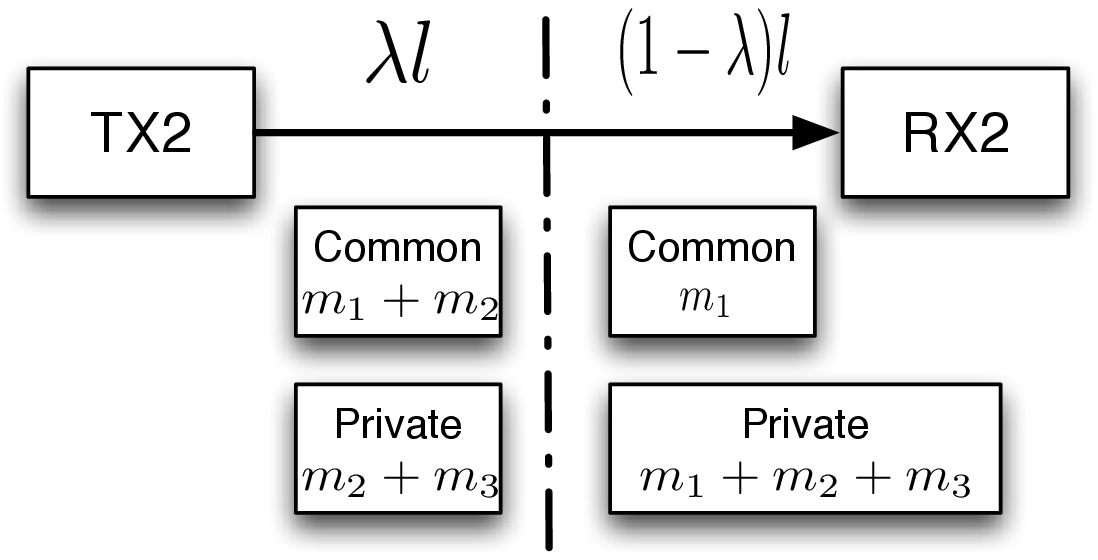}
\label{scenario3}
 }
 \subfigure[CMO scheme]{
\includegraphics [scale=0.3]{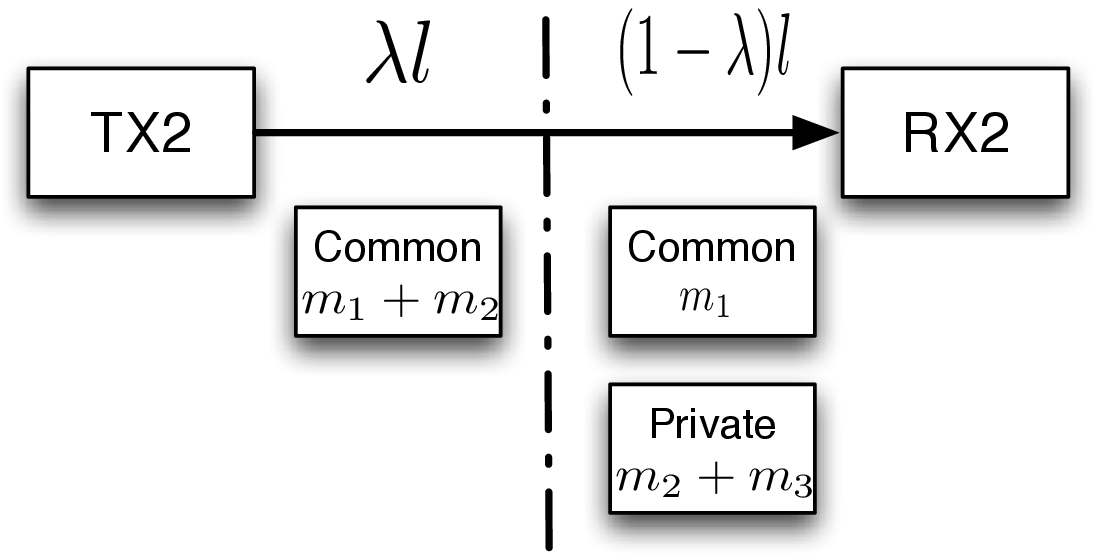}
\label{CMOscenario}
 }
\caption{Different scenarios for time sharing of the HK approach over two slots.}
\vspace{-.2in}
\end{figure}

It is straightforward to show that the high-$\textsc{SNR}$ approximation of the outage regions at RX1 and RX2 for the second and third scenarios include the same outage events as in (\ref{eq:largesnrRX1}) and (\ref{eq:largesnrRX2}). Using similar arguments to that for the first scenario, we can show that the second and third scenarios cannot also improve the achievable DGR.

Hence, we have shown that dividing the block length into two slots and using HK scheme with different rate and power assignment in each slot does not improve the achievable DGR. Using induction, we argue that dividing the block length into $L$ slots will not improve the achievable DGR.

Because of the singularity of the CMO scheme and in order to complete the proof, we need to show that dividing the block length into two slots and using the CMO scheme in the first slot while using the HK scheme in the second one does not improve the DGR achieved by using the general HK scheme. We now consider the scenario shown in Fig. \ref{CMOscenario}. It is straightforward to show that the high-$\textsc{SNR}$ approximation of the outage events at RX1 in this case can be written as
\begin{equation}\small
\begin{split}
&r_1>\lambda\left[1-\gamma_{11}\right]^++(1-\lambda)\left[1-\gamma_{11}-\left[\beta-\gamma_{21}-b\right]^+\right]^+\\
&r_1+\lambda t_{21}>\lambda \max\left\{\left[1-\gamma_{11}\right]^+,\left[\beta-\gamma_{21}\right]^+\right\}+(1-\lambda)\\
&\qquad\left[\max\left\{\left[1-\gamma_{11}\right]^+,\left[\beta-\gamma_{21}\right]^+\right\}-\left[\beta-\gamma_{21}-b\right]^+\right]^+\\
&r_1+\lambda t_{21}-(1-\lambda)t_{22}>\lambda \max\left\{\left[1-\gamma_{11}\right]^+,\left[\beta-\gamma_{21}\right]^+\right\}.
\end{split}
\label{eq:CMO-HK-RX1}
\end{equation}\normalsize
While the high-$\textsc{SNR}$ approximation of the outage events at RX2 can be reduced to
\begin{equation}\small
\begin{split}
&r_2>\left[1-\gamma_{22}\right]^+,\qquad r_2-\lambda t_{21}>(1-\lambda)\left[1-\gamma_{22}-b\right]^+\\
&r_2-(1-\lambda)t_{22}>\lambda\left[1-\gamma_{22}\right]^++(1-\lambda)\left[1-\gamma_{22}-b\right]^+,
\end{split}
\label{eq:CMO-HK-RX2}
\end{equation}\normalsize
where $b$ is the power splitting parameter in the second slot.

Using the outage events in (\ref{eq:CMO-HK-RX1}) and (\ref{eq:CMO-HK-RX2}), we can derive upper bounds on RX1 and RX2 diversities as follows.
\begin{equation}\footnotesize
\begin{split}
&d_1=\min\left\{d_{11},d_{12},d_{13}\right\}\leq\min\left\{d_{11},d_{12}\right\}\\
&\text{where,}\qquad d_{11}=\max\left\{\left[1-\frac{r_1}{\lambda}\right]^+,\left[1-r_1-(1-\lambda)\left[\beta-b\right]^+\right]^+\right\}\\
&d_{12}=\\
&\begin{cases}
\left[1-(r_1+\lambda t_{21})-(1-\lambda)\left[\beta-b\right]^+\right]^+,\\
\qquad\qquad\qquad\text{if}\;\;r_1+\lambda t_{21}\geq\lambda\beta+(1-\lambda)b,\;\;\;b\leq\beta\\
\left[1-\frac{(r_1+\lambda t_{21})-(1-\lambda)b}{\lambda}\right]^++\left[\beta-\frac{(r_1+\lambda t_{21})-(1-\lambda)b}{\lambda}\right]^+,\\
\qquad\qquad\qquad\text{if}\;\;b<r_1+\lambda t_{21}<\lambda\beta+(1-\lambda)b\\
\left[1-r_1-\lambda t_{21}\right]^++\left[\beta-r_1-\lambda t_{21}\right]^+,\;\;\;\text{if}\;\;r_1+\lambda t_{21}\leq b.\\
\end{cases}\\\\
&\text{And,}\qquad d_2=\min\left\{d_{21},d_{22},d_{23}\right\}\leq\min\left\{d_{21},d_{22}\right\}\\
&\text{where,}\qquad d_{21}=\left[1-r_2\right]^+,\qquad d_{22}=\left[1-\frac{r_2-\lambda t_{21}}{1-\lambda}-b\right]^+.
\end{split}
\end{equation}
\normalsize

Through optimization over the parameters $b$, $t_{21}$, and $\lambda$, we can prove that these upper bounds on RX1 and RX2 diversities do not improve the achievable DGR (proof omitted due to space limitations).
\end{proof}

\section{Conclusion}\label{Con}
In this paper, we derived closed-form expressions for an achievable four-dimensional tradeoff among the multiplexing gain pairs and the individual user diversity gain pairs for the ZIC. This region is obtained using the fixed power split HK scheme including its two special cases of CMO and TIAN. Interestingly, our characterization demonstrated the {\em singularity} of the CMO special case and its optimality in a certain range of multiplexing gains. Finally, our analysis revealed the {\em inability} of generalized time sharing to enlarge the DGR achieved with the HK approach.

 \appendix\label{Append1}

In this Appendix, we state the outage events at RX1 and RX2 for the first scenario shown in Fig. \ref{scenario1}. In this scenario, the rate of the message from TX1 $m$ is $R_1$ and it is encoded to the codeword $x_1$. Assume that the messages $m_1$, $m_2$, and $m_3$ with rates $R_{m_1}$, $R_{m_2}$, and $R_{m_3}$ are encoded to the codewords $x_{21}$, $x_{22}$, and $x_{23}$, respectively. Considering that RX1 is a MAC receiver of the three messages $m$, $m_1$, and $m_2$, the following outage events are included in the outage region at RX1.
\begin{equation}\small
R_1>I\left(x_1;y_1|x_{21},x_{22}\right),
\;\; R_1+R_{m_1}+R_{m_2}>I\left(x_1,x_{21},x_{22};y_1\right),
\end{equation}\normalsize
where, $R_{m_1}+R_{m_2}=\lambda T_{21}$. These outage events can be written as
\begin{equation}\footnotesize
\begin{split}
&R_1\geq\lambda\log\left(1+\frac{\textsc{SNR}^{1-\gamma_{11}}}{1+\frac{\textsc{SNR}^{\beta-\gamma_{21}}}{1+\textsc{SNR}^{b_1}}}\right)+(1-\lambda)\log\left(1+\frac{\textsc{SNR}^{1-\gamma_{11}}}{1+\frac{\textsc{SNR}^{\beta-\gamma_{21}}}{1+\textsc{SNR}^{b_2}}}\right)\\
&R_1+\lambda
T_{21}\geq\lambda\log\left(1+\frac{\textsc{SNR}^{1-\gamma_{11}}+\textsc{SNR}^{\beta-\gamma_{21}}}{1+\frac{\textsc{SNR}^{\beta-\gamma_{21}}}{1+\textsc{SNR}^{b_1}}}\right)\\
&\qquad\qquad\qquad\qquad\qquad+(1-\lambda)\log\left(1+\frac{\textsc{SNR}^{1-\gamma_{11}}+\textsc{SNR}^{\beta-\gamma_{21}}}{1+\frac{\textsc{SNR}^{\beta-\gamma_{21}}}{1+\textsc{SNR}^{b_2}}}\right).
\end{split}
\end{equation}\normalsize
Thus, the high-$\textsc{SNR}$ approximation of these outage events can be written as given in (\ref{eq:largesnrRX1}).

We can also consider that RX2 is a MAC receiver of the messages $m_1$, $m_2$, and $m_3$. Thus, the following outage events are included in the outage region at RX2.
\begin{equation}\footnotesize
R_{m_3}>I\left(x_{23};y_2|x_{21},x_{22}\right),
\;\; R_{m_1}+R_{m_2}+R_{m_3}>I\left(x_{21},x_{22},x_{23};y_2\right),
\end{equation}\normalsize
where, $R_{m_3}=R_2-\lambda T_{21}$ and $R_{m_1}+R_{m_2}+R_{m_3}=R_2$. We can write these outage events at RX2 as follows.
\begin{equation}\footnotesize
\begin{split}
&R_2-\lambda T_{21}>\lambda\log\left(1+\frac{\textsc{SNR}^{1-\gamma_{22}}}{1+\textsc{SNR}^{b_1}}\right)+(1-\lambda)\log\left(1+\frac{\textsc{SNR}^{1-\gamma_{22}}}{1+\textsc{SNR}^{b_2}}\right)\\
&R_2>\log\left(1+\textsc{SNR}^{1-\gamma_{22}}\right),
\end{split}
\end{equation}\normalsize
which can be reduced to the outage events given in (\ref{eq:largesnrRX2}) for the high-$\textsc{SNR}$ approximation.

\bibliographystyle{IEEEbib}
\bibliography{MyLib}

\end{document}